\newenvironment{proof}[1][Proof]{\noindent\textbf{#1.} }{\ \rule{0.5em}{0.5em}}
\newtheorem{mylemma}{Lemma}
\newtheorem{mydefinition}{Definition}
\newtheorem{mytheorem}{Theorem}
\newtheorem{mycorollary}{Corollary}
\newcommand{\tr}{\operatorname{Tr}}
\newcommand{\beq}{\begin{equation}}
\newcommand{\eeq}{\end{equation}}
\newcommand{\baq}{\begin{eqnarray}}
\newcommand{\eaq}{\end{eqnarray}}
\newcommand{\unit}{1\!\!1}
\begin{document}

\title{A Method for Generating All Uniform $\pi$-Pulse Sequences Used in Deterministic Dynamical Decoupling}
\author{Haoyu Qi}
\affiliation{Hearne Institute for Theoretical Physics and  Department of Physics \& Astronomy,Louisiana State University, Baton Rouge, Louisiana 70803, USA}
\author{Jonathan P. Dowling}
\affiliation{Hearne Institute for Theoretical Physics and Department of Physics \& Astronomy,Louisiana State University, Baton Rouge, Louisiana 70803, USA}

\begin{abstract}
 Dynamical decoupling has been actively investigated since Viola first suggested using a pulse sequence to protect a qubit from decoherence. Since then, many schemes of dynamical decoupling have been proposed to achieve high-order suppression, both analytically and numerically. However, hitherto, there has not been a systematic framework to understand all existing uniform $\pi$-pulse dynamical decoupling schemes. In this report, we use the projection pulse sequences as basic building blocks and concatenation as a way to combine them. We derived a new concatenated-projection dynamical decoupling (CPDD), a framework in which we can systematically construct pulse sequences to achieve arbitrary high suppression order. All previously known uniform dynamical decoupling sequences using $\pi$ pulse can be fit into this framework. Understanding uniform dynamical decoupling as successive projections on the Hamiltonian will also give insights on how to invent new ways to construct better pulse sequences.
\end{abstract}

\pacs{03.65.Yz,03.67.Pp,89.70.+c}
\maketitle

\section{Introduction}

\label{sec:intro}
One of the major difficulties in the realization of quantum computing and quantum information processing is protecting the quantum state from decoherence. Quantum error correction protocols were developed to meet this challenge \cite{Shor:1995,Steane:1996,Knill:1997}. For a review and recent progress of this area, see references \cite{Zoller:2005,Steane:1998}. 

While quantum error correction can be regarded as a form of closed-loop control, dynamical decoupling has been proposed as a way to counteract the interaction between a quantum system and the environment by an open-loop control field. The idea of using pulse sequences, to protect nuclear spins from classical decoherence, dates back to 1950, when the spin-echo method was found \cite{Hahm:1950}. Since then, many pulse methods have been developed in nuclear magnetic resonance spectroscopy \cite{NMR:book}. In 1998, it was first pointed out that a similar technique, periodical dynamical decoupling (PDD), can be applied to open quantum systems \cite{Viola:98}. By using a control field, with duration shorter than the time scale of the environment, dynamical decoupling can suppress the interaction between a qubit and the bath \cite{Viola:98}. Important theoretical progress was made by understanding  the effect of  dynamical decoupling as the result of a symmetrizing procedure over the group composed of all independent $\pi$ pulses \cite{Viola:99,Zanardi:99}. However, the finite switching time in real experimental conditions makes the symmetrizing imperfect. Therefore, concatenated dynamical decoupling (CDD) was proposed to eliminate higher-order errors in the interaction Hamiltonian \cite{KhodjastehLidar:04,KhodjastehLidar:07}. 

In realistic experiments, the imperfection of pulses such as the rotation angle error and the finite width, must be taken into account. CDD is preferred over PDD in this case, due to the fact that concatenation not only suppresses the interaction with the environment but also the pulse errors to higher order \cite{KhodjastehLidar:04,KhodjastehLidar:07}.

Meanwhile, by optimizing the pulse interval to suppress the low-frequency region of the noise spectrum, Uhrig dynamical decoupling (UDD) can achieve the same suppression order with  fewer pulses, compared with CDD \cite{Uhrig:07,Uhrig:08}. Although it has superior performance, UDD is very sensitive to pulse errors, due to the fact that it only uses single-axis rotations. However, to protect unknown states, uniform DD with multi-axis rotations can compensate errors due to its symmetric structure \cite{Xiao:11,Lange:2010,Ryan:2010,Souza:2012}. \\
\indent Another development in dynamical decoupling is to use random pulses, instead of deterministic schemes, for sufficiently long sequences \cite{Viola:05,Viola:08}. Instead of trying to suppress the interaction to arbitrarily high orders, random dynamical decoupling schemes improves the time dependence of the error accumulation from  quadratic to linear \cite{Viola:05,Viola:08}.\\ 
\indent In this paper we only consider deterministic uniform dynamical decoupling for two reasons: (1) it is easy to implement in experiment, and (2) it is robust against pulse error compared to UDD. Besides analytical calculations, recently many other DD sequences have been found using genetic algorithms to optimize the suppression order \cite{Quiroz:13}, some of which even achieve same suppression order with fewer pulses than CDD. However, a unified understanding of all known DD schemes has not been developed until now.\\ 
\indent In this work, we propose concatenated-projection dynamical decoupling (CPDD) to unify all known uniform DD schemes. Our framework gives a way to construct new pulse sequences and to calculate their suppression order. In Sec.~\ref{sec:set}, we describe the mathematical settings of the dynamical decoupling technique. In Sec.~\ref{sec:CPMG}, we first define our projection pulse sequence and explain its effect as an `atomic' projection. Then we use concatenated projections along different directions to construct more complex pulse sequences with arbitrary suppression order in Sec.~\ref{sec:Concat}. These results comprise the two cornerstones of our theory of CPDD. In Sec.~\ref{sec:CPDD} we formally introduce CPDD by defining the CPDD equivalence classes, which are specified by three integers. In subsection A, a series of properties of CPDD are developed; in subsection B we design a deterministic scheme to construct optimized pulse sequence for each suppression order. A table of known DD schemes is given as well to show how these known schemes fit into our CPDD framework. In Sec.~\ref{sec:DISCUSS}, we discuss why, intuitively, some CPDD sequences are superior than CDD. We also point out a typo in Ref.\cite{Quiroz:13}, which is easily detected within the framework of CPDD. We present our conclusion in Sec.~\ref{sec:conc}.
\section{Dynamical decoupling settings}

\label{sec:set}
We consider a qubit system $S$ coupled to an arbitrary bath  $B$, which forms a closed system on the Hilbert space $\mathcal{H}_S\otimes\mathcal{H}_B$. 
The overall Hamiltonian can be written in the form, \newline
\begin{equation}
H_0 = H_S\otimes\unit_B + \unit_S\otimes H_B +H_{SB},
\end{equation} %
where $\unit$ is the identity operator, $H_S$ is the pure system Hamiltonian, $H_B$ is the pure environment Hamiltonian, and $H_{SB}$ is the interaction between the qubit and the bath. In the following, we will assume the interaction takes the general linear form,\newline 
\begin{equation}
H_{SB} = \sigma_x\otimes B_x + \sigma_y\otimes B_y + \sigma_z\otimes B_z,
\label{eq:HSB}
\end{equation} %
where $B_x,B_y$ and $B_z$ are abitary operator on $\mathcal{H}_B$.

Decoherence can be suppressed by adding a control field solely to the system, i.e., $H_c(t)\otimes\unit_B$. If the control field is a series of pulses, then the suppression effect can be understood as a symmetrizing procedure \cite{Viola:99}.
In the toggling frame, the Hamiltonian is transformed into,\newline
\begin{equation}
\tilde{H}(t) = U_c^\dagger(t) H_0 U_c(t) ,
\end{equation}
where $U_c(t) = \mathcal{T}\exp\lbrace-i\int_0^t H_c(t)dt\rbrace$ is the evolution operator of the control field.
Here $\mathcal{T}$ is time ordering operator. The evolution of the state vector is governed by the evolution operator,\newline
\begin{equation}
\tilde{U}(t) = \mathcal{T}\exp\Big\lbrace-i\int_0^t \tilde{H}(t)dt\Big\rbrace.
\end{equation} %
If the sequence length is $\tau_c$, we can define an average Hamiltonian $\bar{H}(\tau_c)$ by $\tilde{U}(\tau_c) = \exp(-i\bar{H}\tau_c)$. By expanding $\tilde{U}(\tau_c)$, we can collect terms according to different order of $\tau_c$,\newline
\begin{equation}
\mathcal{T}\exp\Big\lbrace-i\int_0^t \tilde{H}(t)dt\Big\rbrace =
e^{-i[\bar{H}^{(0)}+\bar{H}^{(1)}+...]\tau_c}. 
\label{eq:average}
\end{equation} %
The first two terms are
\begin{eqnarray}
&\bar{H}^{(0)}&= \frac{1}{\tau_c}\int_0^{\tau_c}\tilde{H}(t)dt, \notag\\
&\bar{H}^{(1)}&=\frac{-i}{2\tau_c}\int_0^{\tau_c}dt_1\int_0^{t_1}dt_2 [\tilde{H}(t_1),\tilde{H}(t_2)], \notag\\
&...&
\label{eq:magnus}
\end{eqnarray}
We make the separation,\newline 
\begin{equation}
\label{eq:err1}
\bar{H} = \unit_s\otimes H_B + \bar{H}_{\rm{err}} ,
\end{equation}
in which $\bar{H}_{\rm{err}}$ is the part responsible for undesired evolution of the qubit.

In the context of dynamical decoupling, the only measure of the performance of a pulse sequence is the suppression order. The suppression order $N$ is defined through the following equation,
\begin{eqnarray}
\label{eq:err2}
 \bar{H}_{\rm{err}} = \sum_{m=N+1}^{\infty}\bar{H}^{(m)},
\end{eqnarray} %
which means the corresponding pulse sequence achieves $N^{th}$ order decoupling.

\section{Projection pulse sequence}

\label{sec:CPMG}

Now consider that we use $K$ uniform ideal $\pi$ pulses as the control field. Thus the field takes the form $H_c(t) = -\sum_{j=1}^K \frac{\pi}{2}\sigma_j\delta(t-t_j)$, where $t_j = j\tau_d$, $\tau_d$ is the pulse interval and $\sigma_j\in \lbrace \unit,\sigma_x,
\sigma_y,\sigma_z\rbrace$. In the following, we use $P_K...P_1$, $P_K\in\lbrace I,X,Y,Z\rbrace$ to represent corresponding pulse sequence. In the limit of $\tau_d\rightarrow 0$, we have a continuous pulse sequence, and only the zeroth-order term in the Magnus expansion survives,\newline
\begin{equation}
\bar{H}^{(0)}= \frac{1}{\tau_c}\int_0^{\tau_c}U_c^\dagger(t)H_0U_c(t)dt.
\label{eq:1st}
\end{equation} %
Also, since the pulses are ideal, $U_c(t)$ is a piece-wise constant function. The first-order average Hamiltonian reduces to \newline
\begin{equation}
\bar{H}^{(0)}= \frac{1}{K}\sum_{j=1}^K U_{c}^\dagger (t_j) H_0U_{c}(t_j).
\label{eq:symmetrize}
\end{equation} 
If we chose pulses such that $U_{c}(t_j)$ go through each element of certain group such as $\mathcal{G}=\lbrace I,X,Y,Z\rbrace$, Eq.~(\ref{eq:symmetrize}) is just a symmetrizing procedure which projects $H_0$ onto the commutant of the group algebra \cite{Viola:99}.

A more intuitive way to view the effect of pulse sequence is to look at them as combination of basic projections \cite{KhodjastehLidar:04}. The simplest pulse sequence is $P_iP_i$, which is similar to the CPMG pulse sequence \cite{Bloem:07,Carr:1954}. Hereafter we call it the projection pulse sequence and use the notation $p_i = P_iP_i$. To explain its projecting effect, we consider the spin-boson model, which induces the longitudinal decay of the spin. The interaction takes the form,
\begin{equation}
H_{SB} = \sum_k(g_k\sigma^+\otimes b_k+g_k^*\sigma^-\otimes b_k^\dagger),
\end{equation} %
where $b_k$ is the annihilation operator of a photon with momentum $k$, and $g_k$ is the coupling strength between photon with mode $k$ and the spin. Here $\sigma^\pm$ is the creation (annihilation) operator, $\sigma^\pm = \sigma_x\pm i\sigma_y$. 

If we apply $p_z=ZZ$ to the system, then $U_c(t_j)\in \mathcal{G} = \lbrace\unit,Z\rbrace$. Using Eq.~(\ref{eq:symmetrize}), the interaction term will be completely removed in the continuous limit due to the fact $\sigma_z\sigma_{x(y)}\sigma_z = -\sigma_{x(y)}$. Therefore, geometrically the pulse sequence $p_z$ projects the Hamiltonian along the $z$ direction. 

However, in real experimental conditions, there is an upper limit of the pulse switching rate; thus $\tau_d$ is finite. Although the Magnus expansion is still valid, so long as $\tau_c\omega_c << 1$, where $\omega_c$ is the cut-off frequency of the bath, the projection is not exact anymore due to higher-order corrections.  

\begin{mytheorem}
\label{theo:CPMG}
: Assume the interaction between a single qubit and bath takes the form of Eq.~(\ref{eq:HSB}). After applying the projection pulse sequence $p_j$ ($j=x,y,z$) with pulse interval $\tau_d$, the zeroth order error Hamiltonian defined in Eq.~(\ref{eq:err1}) and Eq.~(\ref{eq:err2}) is given by,\newline
\begin{equation}
\bar{H}^{(0)}_{\rm{err}} \equiv \pi_j^{(0)}H_0=\sigma_j\otimes B_j .
\end{equation}
Thus the full error average Hamiltonian is given by,
\begin{equation}
\bar{H}_{\rm{err}} = \sigma_j\otimes [B_j+k_j^{(1)}(\tau_d)]+\sum_{i\perp j}\sigma_i\otimes [B_i^{(1)}+k_{i}^{(2)}o(\tau_d^2)],
\end{equation} %
where we use $\pi_j^{(0)}$ to represent the mapping from $H_0$ to $\bar{H}^{(0)}_{\rm{err}}$ that is induced by the projection pulse sequence $p_j$. The symbol $\perp$ represents the directions orthogonal to direction $j$ and $B_i^{(1)}\sim k_i^{(1)}o(\tau_d)$. Here $k^{(n)}_i$ is some combination of commutators of the bath operators with dimension of $[H]^n$.
\end{mytheorem}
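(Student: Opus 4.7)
The plan is to attack the theorem order by order in the Magnus expansion of Eq.~(\ref{eq:average}), exploiting the very simple two-interval structure of the toggled Hamiltonian generated by $p_j=P_jP_j$. Since $U_c(t)=\unit$ on $[0,\tau_d)$ and $U_c(t)=\sigma_j$ on $[\tau_d,2\tau_d)$, the discrete form Eq.~(\ref{eq:symmetrize}) gives immediately $\bar{H}^{(0)}=\tfrac12(H_0+\sigma_j H_0\sigma_j)$. Plugging in $H_0=\unit_S\otimes H_B+\sum_k\sigma_k\otimes B_k$ and invoking $\sigma_j\sigma_j\sigma_j=\sigma_j$ together with $\sigma_j\sigma_i\sigma_j=-\sigma_i$ for $i\perp j$, the perpendicular components cancel and the parallel ones survive, yielding $\bar{H}^{(0)}=\unit_S\otimes H_B+\sigma_j\otimes B_j$. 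After removing the free bath piece per Eq.~(\ref{eq:err1}), this is exactly $\bar{H}^{(0)}_{\rm err}=\sigma_j\otimes B_j$, which also defines the map $\pi_j^{(0)}$.

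For the subleading corrections I would split $H_0=H_\parallel+H_\perp$ with $H_\parallel=\unit_S\otimes H_B+\sigma_j\otimes B_j$ and $H_\perp=\sum_{i\perp j}\sigma_i\otimes B_i$, so that the toggled Hamiltonian equals $H_\parallel\pm H_\perp$ on the two intervals. The first-order Magnus term in Eq.~(\ref{eq:magnus}) then collapses to $\bar{H}^{(1)}=-\tfrac{i\tau_d}{4}[H_\parallel-H_\perp,H_\parallel+H_\perp]=-\tfrac{i\tau_d}{2}[H_\parallel,H_\perp]$. Expanding this commutator with $[\sigma_j\otimes B_j,\sigma_i\otimes B_i]=i\epsilon_{jik}\sigma_k\otimes\{B_j,B_i\}$ and $[\unit\otimes H_B,\sigma_i\otimes B_i]=\sigma_i\otimes[H_B,B_i]$, one sees that every resulting term lies along a Pauli direction perpendicular to $j$, carries exactly one power of $\tau_d$, and comes dressed with a (anti)commutator of bath operators. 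This pins down the leading $B_i^{(1)}\sim k_i^{(1)}o(\tau_d)$ and simultaneously shows that the $\sigma_j$ component receives no $O(\tau_d)$ correction.

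To reach the claimed form of the full $\bar{H}_{\rm err}$, I would finish with a $\mathbb{Z}_2$ parity argument. Conjugation by $\sigma_j$ leaves $H_\parallel$ invariant and flips the sign of $H_\perp$, so any nested commutator built from the two interval Hamiltonians is $\sigma_j$-even (hence along $\unit_S\otimes\,\cdot$ or $\sigma_j\otimes\,\cdot$) precisely when it contains an even number of $H_\perp$ factors, and $\sigma_j$-odd (hence perpendicular) otherwise. Since each Magnus order $\bar{H}^{(m)}$ pairs an $m$-fold commutator with an $m$-fold time integral of length $\sim\tau_d$, the $\sigma_j$ component ends up collecting only even powers of $\tau_d$, starting from the $\tau_d^0$ contribution $B_j$, while the perpendicular components collect only odd powers starting from $\tau_d$. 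Packaging the subleading pieces into $k_j^{(1)}(\tau_d)$ and $k_i^{(2)}o(\tau_d^2)$, with the bath-operator content of each $k_i^{(n)}$ tracked by counting the number of $H_0$'s that appear in the nested commutator, reproduces the decomposition written in the statement.

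The hard part is not any single estimate but the bookkeeping of this parity/dimension argument at arbitrary order: one has to verify that the interchange of the two time intervals under $\sigma_j$-conjugation really implements the $\mathbb{Z}_2$ action on the nested-commutator structure of Eq.~(\ref{eq:average}), and that time-ordering does not generate cross-terms leaking between the even and odd sectors. Once that symmetry is laid out cleanly, the dimensional labels $[H]^n$ for $k_i^{(n)}$ are a matter of counting $H_0$'s, and the claimed scaling in $\tau_d$ drops out from the standard Magnus bound under the usual assumption $\tau_c\omega_c\ll1$.
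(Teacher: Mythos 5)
Your zeroth-order step and your first-order Magnus computation are correct and follow essentially the same route as the paper: the toggled Hamiltonian is piecewise constant, equal to $H_\parallel+H_\perp$ and $H_\parallel-H_\perp$ on the two intervals, $\bar H^{(0)}=\tfrac12(H_1+H_2)=H_\parallel$, and $\bar H^{(1)}\propto\tau_d[H_\parallel,H_\perp]$. (Your observation that this first-order term is odd under conjugation by $\sigma_j$, hence has no component along $\unit$ or $\sigma_j$, is actually more careful than the paper's appendix, where the quoted $2\tau_d[B_y,B_x]\otimes\sigma_z$ contribution cancels between $[\sigma_x\otimes B_x,\,-\sigma_y\otimes B_y]$ and $[\sigma_y\otimes B_y,\,-\sigma_x\otimes B_x]$, and the anticommutator pieces $\{B_z,B_x\}$, $\{B_z,B_y\}$ are omitted.)

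The genuine gap is in your closing parity argument. The $\mathbb{Z}_2$ grading does correctly sort any given nested commutator into the parallel sector (even number of $H_\perp$ factors, components along $\unit$ and $\sigma_j$) or the perpendicular sector (odd number), but that parity is not determined by the Magnus order $m$, and it is $m$ alone that fixes the power of $\tau_d$. Because each $\tilde H(t_i)=H_\parallel\pm H_\perp$ contributes both a parallel and a perpendicular piece, the multilinear expansion of $\bar H^{(m)}$ for $m\ge2$ contains terms with anywhere from $0$ to $m+1$ factors of $H_\perp$ and therefore feeds \emph{both} sectors at order $\tau_d^m$: for example $[H_\perp,[H_\parallel,H_\perp]]$ is parallel while $[H_\parallel,[H_\parallel,H_\perp]]$ is perpendicular, and both occur in $\bar H^{(2)}$. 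Your conclusion that the $\sigma_j$ component collects only even powers of $\tau_d$ and the perpendicular components only odd powers is therefore false; in particular it would forbid the $k_i^{(2)}o(\tau_d^2)$ corrections in the perpendicular directions that the theorem itself asserts. Fortunately the theorem does not need any parity refinement: all that is required is $\bar H^{(0)}_{\rm err}=\sigma_j\otimes B_j$, the identification of the $O(\tau_d)$ pieces with $\bar H^{(1)}$, and the standard Magnus counting $\sum_{m\ge2}\bar H^{(m)}=O(\tau_d^2)$, which is exactly how the paper's proof concludes. Replacing your last paragraph with that order counting closes the argument.
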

This projection point of view gives a geometrical and intuitive way to understand the effect of the pulse sequence $P_jP_j$. The full proof is given in Appendix A.

In summary, we have shown that the effect of the projection pulse sequence $p_i =P_iP_i$ is to project the Hamiltonian along $i$ direction up to first order.
\section{Concatenation of cyclic pulse sequences as successive projections}

\label{sec:Concat}

The higher-order terms remaining in the $H_{\rm{err}}$, after applying the projections, will coherently add up  with time. To achieve higher-order suppression, we need to project the  Hamiltonian along different directions successfully. We will show in this section that the effect of the concatenation of cyclic pulse sequences is to apply successively the projections induced by each pulse sequence.

A pulse sequence is called cyclic when the generated evolution operator is periodic with period $\tau_c$ up to a phase factor,
\begin{equation}
U_c(n\tau_c) = e^{i\phi}U_c(\tau_c),
\label{eq:cyc}
\end{equation} %
where $n=0,1,2~...$ and $\phi$ is an arbitrary phase.

An equivalent definition of a cyclic pulse sequence is that the product of all the pulses is equal to one, up to an arbitrary phase,  
\begin{equation}
\prod_{i=1}^{K}P_i = e^{i\phi}\unit,
\label{eq:cyc1}
\end{equation}
which follows directly from Eq.~(\ref{eq:cyc}) when $n=0$.
We will see that this property is necessary for the proof of the equivalence between concatenation and successive projections. 

Another useful property of the cyclic pulse sequence is that the concatenation of two cyclic pulse sequences is still cyclic. The proof is straightforward and we include it in the Appendix A. 
Having the definition of cyclic pulse sequence, we now prove the second basic theorem of our CPDD scheme.
\begin{mylemma}
\label{lem:conca}
: Consider two pulse sequences A and B, $P_{K}^i...P_{1}^i$, where i = $A,B$, with the same pulse interval. The first pulse sequence $A = P_jP_j$ ($j\in\lbrace x,y,z\rbrace$), which is a projection pulse sequence, and sequence $B$ is concatenated from multiple projection pulse sequences. A third pulse sequence C is constructed by concatenating  A and B, $C = A[B] \equiv P_jBP_jB$. 
The following relationship is true,
\begin{equation}
\pi_C^{0} = \pi_B^{0}\pi_A^{0},
\end{equation}  
where the mapping $\pi_i^{0}$ induced by applying the pulse sequence $i$ is defined in Theorem \ref{theo:CPMG}.
\end{mylemma}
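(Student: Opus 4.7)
The plan is to unfold both sides of the claim using the explicit formula \eq{eq:symmetrize} for the zeroth-order average Hamiltonian, classify the partial products of $C$ in terms of those of $B$, and then recognise the result as the composition $\pi_B^{(0)}\pi_A^{(0)}$. Concretely, I would write $\pi_C^{(0)}H_0 = \frac{1}{K_C}\sum_{m=1}^{K_C}U_m^{C\dagger}H_0 U_m^C$ with $K_C = 2K_B + 2$, and then use the cyclic property of $B$ (which by \eq{eq:cyc1} forces $U_{K_B}^B = e^{i\phi_B}\unit$) to classify the partial products of $C = P_j B P_j B$: (i) $U_m^C = U_m^B$ for $1 \le m \le K_B$, (ii) $U_m^C = U_{m-K_B-1}^B\,P_j$ up to a phase for $K_B + 2 \le m \le 2K_B + 1$, and (iii) two boundary values $U_{K_B+1}^C = P_j$ and $U_{2K_B+2}^C = \unit$ produced by the inserted $P_j$ pulses.

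With this decomposition in hand, the sum splits into two bulk pieces: the first contributes $K_B\,\pi_B^{(0)}(H_0)$ directly, while the second contributes $K_B\,P_j^{\dagger}\pi_B^{(0)}(H_0)P_j$ because the partial products of the second copy of $B$ are all right-multiplied by $P_j$. Combining them one recognises the structure
\begin{equation}
\pi_C^{(0)}H_0 \;=\; \frac{1}{2}\bigl[\pi_B^{(0)}(H_0) + P_j^{\dagger}\pi_B^{(0)}(H_0)P_j\bigr],
\end{equation}
which is exactly $\pi_A^{(0)}$ acting on $\pi_B^{(0)}(H_0)$ by the definition in Theorem~\ref{theo:CPMG}. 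A short check based on the Pauli identity $P_j U_\ell^B = \pm U_\ell^B P_j$ gives $(P_j U_\ell^B)^{\dagger}H(P_j U_\ell^B) = (U_\ell^B P_j)^{\dagger}H(U_\ell^B P_j)$, so the two elementary projections commute and the identity may equivalently be written $\pi_C^{(0)} = \pi_B^{(0)}\pi_A^{(0)}$, matching the stated form.

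The main obstacle I expect is the bookkeeping of the two boundary intervals created by the inserted $P_j$ pulses: they produce duplicate $\unit$ and $P_j$ entries in the multiset of partial products and shift the naive normalization away from $2K_B$. I would handle this by appealing once more to the cyclicity of $B$---which makes $U_{K_B}^B P_j$ collapse to $P_j$ and $U_{K_B}^B$ to $\unit$ up to phase---so that these boundary duplicates merge consistently with the $m = K_B$ and $m = 2K_B + 1$ terms of the bulk sums, and the combinatorial weights line up with the factor of $\frac{1}{2}$ built into $\pi_A^{(0)}$. A small auxiliary fact I would need along the way, and which is stated separately in the text, is that the concatenation of two cyclic sequences is again cyclic; this is what justifies writing $U_{2K_B+2}^C = \unit$ up to phase and closes the accounting.
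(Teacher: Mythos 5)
Your overall strategy---unfolding $\pi_C^{(0)}$ via \eq{eq:symmetrize}, classifying the partial products of $C$ in terms of those of $B$ using the cyclicity condition \eq{eq:cyc1}, and pairing the two copies of $B$---is the same as the paper's, and your classification of the partial products is internally consistent. The genuine gap is in the pulse bookkeeping for the concatenation. In the paper's definition of $C=A[B]$, each pulse $P_j$ of $A$ is applied at the same instant as the adjacent pulse $P^B_{K_B}$ of $B$ (written $(P_jP^B_{K_B})$, ``no free evolution in between''), so $C$ has $K_C=K_AK_B=2K_B$ pulses and exactly $2K_B$ free-evolution intervals; this convention is what makes Property 3, $K_{A[B]}=K_AK_B$, and the worked example $p_x[p_y[p_z]]=IZXZIZXZ$ (8 pulses, not 14) come out. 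You instead take $K_C=2K_B+2$, giving the inserted $P_j$ pulses their own intervals and producing the boundary terms $U^C_{K_B+1}\propto P_j$ and $U^C_{2K_B+2}\propto\unit$. These do \emph{not} ``merge consistently'' with the bulk: since $U^B_{K_B}\propto\unit$ by cyclicity, the multiset of conjugating unitaries in your sum contains $\unit$ and $P_j$ each with multiplicity two and every other element once, so the uniform average over $2K_B+2$ terms is
\begin{equation}
\frac{1}{2K_B+2}\Big[2K_B\,\pi_A^{(0)}\pi_B^{(0)}H_0+2\,\pi_A^{(0)}H_0\Big]=\frac{K_B\,\pi_A^{(0)}\pi_B^{(0)}H_0+\pi_A^{(0)}H_0}{K_B+1},
\end{equation}
which equals $\pi_B^{(0)}\pi_A^{(0)}H_0$ only if $\pi_A^{(0)}H_0=\pi_A^{(0)}\pi_B^{(0)}H_0$. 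A concrete failure: for $A=p_z$, $B=p_x$ and $H_0=\sigma_z\otimes B_z$ one has $\pi_B^{(0)}\pi_A^{(0)}H_0=0$, while your expression gives $\frac{1}{K_B+1}\,\sigma_z\otimes B_z\neq 0$. So under your convention the claimed identity is simply false, not merely harder to bookkeep.

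The repair is to adopt the merged-pulse convention from the outset: then the $2K_B$ partial products pair exactly as $(U^B_m,\ e^{i\phi}P_jU^B_m)$ for $m=1,\dots,K_B$, each pair contributes $U_m^{B\dagger}\big(2\pi_A^{(0)}H_0\big)U_m^B$, and the $\tfrac{1}{2K_B}$ normalization yields $\pi_B^{(0)}\pi_A^{(0)}H_0$ with no leftover terms---which is precisely the paper's proof. Your remaining observations are sound and align with what the paper uses implicitly: right- versus left-multiplication by $P_j$ is immaterial under conjugation because Pauli operators commute up to a sign, and for the same reason $\pi_A^{(0)}$ and $\pi_B^{(0)}$ commute, so obtaining the product in the order $\pi_A^{(0)}\pi_B^{(0)}$ rather than $\pi_B^{(0)}\pi_A^{(0)}$ is not an error.
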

Lemma \ref{lem:conca} is the theoretical cornerstone of this work. It explains why concatenation can increase the suppression order, which is not so obvious. The cyclic properties and the changeability of different pulses (up to an irrelevant phase factor) are necessary for the proof. The full proof is included in the Appendix A. 

\section{Concatenated projections dynamical decoupling}

\label{sec:CPDD}
What really distinguishes our work from the CDD scheme is that we chose the projection pulse sequence as the basic element of concatenation. Motivated by Theorems \ref{theo:CPMG} and \ref{theo:con}, we define our concatenated-projection dynamical decoupling (CPDD) as a new way to construct pulse sequences by applying projections along different directions successively. Since each projection kills the interaction terms orthogonal to it by one more order, by appropriately combining different projections, our CPDD can achieve arbitrarily high suppression order.
\begin{mydefinition}
\label{def:CPDD}
: A CPDD pulse sequence is specified by an ordered series $i_N,i_{N-1},...,i_1$. It is constructed by concatenating N projection pulse sequences successively, $A = p_{i_N}[p_{i_{N-1}}[...[p_{i_1}\underbrace{]...]}_N$, where $i_j\in \lbrace x,y,z \rbrace$ and $1\leq j\leq N$.
\end{mydefinition} 
The suppressing effect of the CPDD sequence on the Hamiltonian follows immediately from the combination of the effects of projections and concatenation. 
\begin{mytheorem}
\label{theo:con}
 : Consider a CPDD pulse sequence A specified by $i_N,i_{N-1},...,i_1$. After applying pulse sequence A, the  average interaction Hamiltonian is given by\newline
\begin{equation}
\bar{H}_{err} = \sum_{i=x,y,z}\sigma_i\otimes[ B_i^{(d_i)} + k_i^{(d_i+1)}o(\tau_d^{d_i+1})],
\end{equation} %
where 
\begin{equation}
d_i = \sum_{j\perp i}n_j,
\end{equation}
and $n_j$ is the number of $p_j$ sequences.
\end{mytheorem}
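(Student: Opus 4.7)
The plan is to prove Theorem~\ref{theo:con} by induction on the concatenation depth $N$, using Theorem~\ref{theo:CPMG} for the base case and Lemma~\ref{lem:conca} as the central engine of the inductive step.

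For the base case $N=1$, the sequence reduces to a single projection $p_{i_1}$, so $n_{i_1}=1$ and $n_j=0$ for $j\neq i_1$. Substituting into the formula gives $d_{i_1}=0$ and $d_i=1$ for $i\perp i_1$, which is precisely the statement of Theorem~\ref{theo:CPMG}.

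For the inductive step, I would assume the claim holds at depth $N$ and write the depth-$(N+1)$ CPDD sequence as $A_{N+1}=p_{i_{N+1}}[A_N]$, where $A_N=p_{i_N}[\cdots[p_{i_1}]\cdots]$. Since $A_N$ is a concatenation of cyclic projection sequences, it is itself cyclic, so Lemma~\ref{lem:conca} applies and gives $\pi_{A_{N+1}}^{(0)}=\pi_{A_N}^{(0)}\pi_{p_{i_{N+1}}}^{(0)}$. Operationally one first applies $p_{i_{N+1}}$ to $H_0$: Theorem~\ref{theo:CPMG} then produces an effective Hamiltonian that preserves the $\sigma_{i_{N+1}}$ direction at its current order and raises the $\tau_d$-order by one in the two directions orthogonal to $i_{N+1}$. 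Feeding this effective Hamiltonian into the inner map $\pi_{A_N}^{(0)}$ and invoking the induction hypothesis yields an error of the form $\sum_i \sigma_i\otimes[B_i^{(d_i^{(N)}+\Delta_i)}+O(\tau_d^{d_i^{(N)}+\Delta_i+1})]$, with $\Delta_i=0$ when $i=i_{N+1}$ and $\Delta_i=1$ when $i\perp i_{N+1}$. Appending $p_{i_{N+1}}$ increments only $n_{i_{N+1}}$ by one, so the combinatorial identity $d_i^{(N+1)}=d_i^{(N)}+\Delta_i=\sum_{j\perp i}n_j^{(N+1)}$ holds, closing the induction.

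The main technical obstacle is to verify that at every nesting level the suppression genuinely increments in powers of the base pulse interval $\tau_d$, rather than in powers of the much larger cycle length $\tau_c^{(N)}$ of the inner sequence. Concretely, one must track how the higher-order Magnus commutators generated at the outer level inherit $\tau_d$ factors from the inner cycle's residual operators: because those residuals are already $O(\tau_d^{d_i^{(N)}})$ in the perpendicular directions, the outer-level commutators $[\tilde H(t_1),\tilde H(t_2)]$ automatically carry those same factors, so the relevant small parameter remains $\tau_d$ at every level of concatenation. Once this bookkeeping is handled in the same spirit as the single-projection analysis of Appendix A, the induction proceeds cleanly and the stated formula for $d_i$ follows as an immediate combinatorial consequence of the rule that each new $p_{i_{N+1}}$ raises the order of every perpendicular component by one.
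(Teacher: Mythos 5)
Your proposal is correct and follows essentially the same route as the paper: the paper's proof also iterates Lemma~\ref{lem:conca} to factor $\pi_A^{(0)}$ into a product of single-projection maps and then counts orders via Theorem~\ref{theo:CPMG}, which is exactly your induction written without the formal base/step scaffolding. Your explicit treatment of the $\tau_d$-versus-$\tau_c$ bookkeeping is a point the paper's two-line proof silently skips, so your version is, if anything, the more careful one.
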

\textbf{Proof} : Repeatedly using Lemma \ref{lem:conca}, the leading order of the error average Hamiltonian after applying sequence $A$ is given by\newline
\begin{equation}
\pi_A^{(0)} H_0 = \pi^{(0)}_{i_1}\pi_{i_2}^{(0)}...\pi_{K_A}^{(0)}H_{0}~.
\end{equation}
From Theorem \ref{theo:CPMG}, each projection remove the first order term in the perpendicular direction. Therefore,
\begin{equation}
 \bar{H}_{err}^{(0)} = \sum_{i=x,y,z}\sigma_i\otimes B_i^{(d_i)},
\end{equation} %
where 
\begin{equation}
\label{eq:sum}
d_i = \sum_{j\perp i}n_j.
\end{equation}
\newline\newline
From Theorem \ref{theo:con} we can see that the effect of $n_i$ $p_i$s is to suppress the error Hamiltonian along the $i$ direction to $n_i$th order.
From Eq.~(\ref{eq:sum}) we also notice that the order of how different projection pulse sequences is concatenated does not affect the leading order of the average Hamiltonian along each direction. Therefore we can define an equivalence relationship between different CPDD pulse sequences,
\begin{mydefinition}
\label{def:eqv}
 : Consider two pulse sequences $A$ and $A'$. The leading order of the error Hamiltonians induced by each of them are $\bar{H}^{(0)}_{err} = \sum_{i=x,y,z}\sigma_i\otimes B_i^{(d_i)}$ and $\bar{H}^{(0)'}_{err} = \sum_{i=x,y,z}\sigma_i\otimes B_i^{(d_i')}$. We define $A$ and $A'$ to be equivalent to each other\newline
\begin{equation}
A\sim A',
\end{equation}
if the leading order of the average error Hamiltonians are the same along each direction, namely $n_i=n_i'$.
\end{mydefinition}
It can be easily proved that the relationship defined above satisfies the three properties of an equivalence relationship. 

Therefore, for a CPDD sequence specified by sequence $a = i_N,i_{N-1},...,i_1$, all CPDD sequences specified by $a$'s permutations $A\lbrace i_N,i_{N-1},...,i_1\rbrace$ form an equivalence class. By the virtue of the equivalence class, only three numbers $n_x,n_y,n_z$ are needed 
to completely specify a CPDD class.

\begin{mydefinition}
: A CPDD class is defined as an equivalence class with equivalence relationship defined in Definition \ref{def:eqv}, specified by three integers, $\lbrace n_x,n_y,n_z\rbrace$. The structure of the pulse sequence can be generated by concatenating all $n_i$ $p_i$ sequences ($i=x,y,z$) in arbitrary order.
\end{mydefinition}
From the definition of CPDD and Theorem \ref{theo:con}, we derive a series of properties satisfied by CPDD sequences and their equivalence classes.
\subsection{Properties of CPDD }

\label{subsec:propert}

Due to the way concatenation connects two pulse sequences, we can derive two properties that the structure of each CPDD sequence must satisfy.\\\\
1. For an arbitrary CPDD pulse sequence, each odd site has the same kind of $\pi$ pulse.

We prove this by induction.
Consider a pulse sequence $A = P_KP_{K-1}...P_{1}$, which is concatenated from $N$ projection pulse sequences. Pulse sequences $ A_n$ are concatenated from the first $n$ ($1\leq n\leq N$) of them. The following relations are satisfied,
\begin{align}
\label{eq:induct}
A_n = p_i[A_{n-1}],
\end{align}  
where $p_i$ is the $n$th projection pulse sequence. Assume for subsequence $A_{n-1}$ the pulses are the same for each odd site,
\begin{equation}
A_{n-1} = P_{2n-2}P_0.....P_2P_0.
\label{eq:recur}
\end{equation}
Let's examine the pulse sequence $A_n$,
\begin{eqnarray}
A_n &=& p_i[A_{n-1}]\notag\\
&=&(P_iP_{2n-2})P_0...P_2P_0(P_iP_{2n-2})P_0...P_2P_0.
\end{eqnarray}
Therefore, the fact that all the pulses at each odd site are the same still holds. 

Since $A_2 = p_i[p_j]=(P_iP_j)P_j(P_iP_j)P_j$, which also has the same kind of pulse on its odd sites, by induction we have proved that for each odd sites of $A_N$ the pulses are the same:
\begin{equation}
P_{2m+1} = P_0, ~~~m=0,1,...
\end{equation}\\\\
2. For an arbitrary CPDD pulse sequence, the first half and the second half subsequences are the same.\\
Again this property follows from the definition of concatenation. Using Eq.~(\ref{eq:induct}) for $n = N$ we have
\begin{eqnarray}
A = p_{i_n}[A_{n-1}].
\end{eqnarray} 
Assume $A_{n-1} = P_{2N-2}P_{2N-3}...P_1$, we have
\begin{equation}
A = (P_{i_n}P_{2N-2})P_{2N-3}...P_1(P_{i_n}P_{2N-2})P_{2N-3}...P_1.
\end{equation}
Obviously sequence $A$ is composed the same two copies of $A_{N-1}$, or more precisely
\begin{equation}
P_m = P_{m-N/2}, ~~m=1,2,...,N.
\end{equation}\\\\
3. For CPDD class $\lbrace n_x,n_y,n_z\rbrace$, the number of pulses or sequence length $K$ is given by
\begin{equation}
K_{n_x,n_y,n_z} = 2^{n_x+n_y+n_z}.
\label{eq:length}
\end{equation} %
\\
The proof is straightforward. At first, each basic projection is induced by the two same pulses. Secondly, the length of two concatenated pulse sequences, $A$ and $B$, is equal to the product of the length of each two,
$K_{A[B]} = K_AK_B$. Therefore, for a pulse sequence composed of $n_i$ pairs of ($P_i,P_i$), the total pulse number $K$ is given by Eq.~(\ref{eq:length}).\\\\
4. For the CPDD class $\lbrace n_x,n_y,n_z\rbrace$, the suppression order achieved is given by,\newline
\begin{equation}
N_{n_x,n_y,n_z} = \min{\lbrace n_y+n_z,n_x+n_z,n_x+n_y\rbrace}.
\label{eq:suppr}
\end{equation} %

From Theorem \ref{theo:con}, the leading order of the error Hamiltonian induced by any pulse sequence in the CPDD class $\lbrace n_x,n_y,n_z \rbrace$ is given by
\begin{equation}
\bar{H}^{(0)}_{err} = \sum_{i=x,y,z}\sigma_i\otimes B_i^{(d_i)},
\end{equation}
where $d_i = \sum_{j\perp i}n_j$. Since the suppression order $N$ is defined as the leading order of $\bar{H}$, $N=\min{\lbrace d_x,d_y,d_z\rbrace}$.
\\\\
From the expression of suppression order, Eq.~(\ref{eq:suppr}), we can see that simply increasing pulse numbers (number of projections) does not necessarily increase the suppression order. Actually, in the framework of CPDD, only for pulse sequences with certain pulse numbers can the suppression order increase.\\\\
5.~For a given suppression order $N$, the minimum number of pulses, $K_{\rm{min}}$, required to achieve such suppression order is given by \newline 
\begin{equation}
\log_2(K_{\rm{min}}) = \frac{1}{2}[3N+\frac{1}{2}(1-1^{\oplus N})],
\label{eq:Kinc}
\end{equation} 
where $1^{\oplus N} = \underbrace{1\oplus 1 \oplus ~...~\oplus 1}_{N}$.\\\\
To achieve suppression order $N$, $3N$ terms in the interaction Hamiltonian need to be eliminated due to the form of $H_{SB}$, Eq.~(\ref{eq:HSB}). However, each basic projection $\pi_i$ requires two pulses, which implies that only an even number of terms can be eliminated for a given CPDD sequence. Therefore, we need to add one more pulse depending on whether $N$ is odd or not. Dividing the total number of eliminated terms by two we have, \newline
\begin{equation}
\sum_{i=x,y,z}n_i =  \frac{1}{2}[3N+\frac{1}{2}(1+(-1)^{N+1}].
\end{equation} %
Using the results of property 3, we have Eq.~(\ref{eq:Kinc}).
The mysterious series $4,8,32,64,256...$ was first found by a genetic algorithm in Ref.~\cite{Quiroz:13} which is now understood within the framework of our unifying CPDD.
\subsection{Optimized uniform dynamical decoupling}

The pulse sequences corresponding to the pulse number in Eq.~(\ref{eq:Kinc}) uses the minimum number of pulses at each suppression order. This optimized uniform dynamical decoupling (OUDD) scheme can be represented using the CPDD indexes as
\begin{equation}
\label{eq:OUDD}
\mbox{OUDD}_k :\frac{1}{2}\lbrace k-1^{\oplus k},k+1^{\oplus k},k+1^{\oplus k}\rbrace.
\end{equation} %
The suppression order of OUDD$_k$ is $N_k = k$ and the sequence length is given by $K_{\rm{min}}$ in Eq.~(\ref{eq:Kinc}).\\\\
As we can see, some of the pulse sequences are particular levels of CDD$_l$ and GA8$_l$. To compare with OUDD and other known DD schemes, we list the corresponding CPDD indexes of known DD schemes in Table \ref{tab:1} below.\newline
\begin{minipage}{\linewidth}
\centering
\captionof{table}{Known DD schemes represented as CPDD } \label{tab:1} 
\begin{tabular}{c|c|c|c|c}
\hline\hline
$\lbrace n_x,n_y,n_z\rbrace$~ & ~ Name ~& ~Pulse sequence ~&~ K ~& ~N~\\
\hline
$\lbrace 0,0,1\rbrace$ & Projection & $P_iP_i$ & 2 & 0\\
\hline
$\lbrace 0,1,1\rbrace$ & PDD(CDD$_{1}$) & $P_iP_jP_iP_j$ & 4 & 1\\
\hline
$\lbrace 1,1,1\rbrace$ & GA8$_a$ & $IP_iP_jP_iIP_iP_jP_i$ & 8 & 2\\
\hline
$\lbrace 0,l,l\rbrace$ & CDD$_{l}$ & CDD[CDD$_{l-1}$] & $4^l$ & $l$\\
\hline
$\lbrace l,l,l\rbrace$ & GA8$_{l}$ & GA8$_a$[GA8$_{l-1}$] & $8^l$ & $2l$\\
\hline\hline
\end{tabular}
\bigskip

\end{minipage}

\section{Discussion}

\label{sec:DISCUSS}

Although CDD also relies on concatenation, the fact that our CPDD uses basic projections as building blocks makes finding more efficient pulse sequences possible. To make this clear, we consider the CDD from the view point of our CPDD. CDD$_1$ can be considered as the concatenation of two different projections,  CDD$_1 = X(YY)X(YY) = ZYZY$ \cite{KhodjastehLidar:07}. Therefore the effect of CDD$_1$ is to apply projections along the y and x direction successively,\newline
\begin{eqnarray}
 \pi_{CDD_1}^{(0)}H_{SB} &\equiv& \pi^{(0)}_y\pi_x^{(0)}H_0 \notag\\
&=&\pi_y^{(0)}[\sigma_x\otimes B_x+\sigma_y\otimes B_y^{(1)}+\sigma_z\otimes B_z^{(1)}]\notag\\
&=&\sigma_x\otimes B_x^{(1)}+\sigma_y\otimes B_y^{(1)}+\sigma_z\otimes B_z^{(2)} \notag\\
&\sim& k^{(1)}o(\tau_d^1).
\end{eqnarray} %
As we can see, CDD$_1$ completely removes the zeroth order interaction terms, thus achieving suppression order $N_{\rm{CDD}_1} = 1$.

Now consider CDD$_2$, which is the concatenation of two CDD$_1$ sequences: we write explicitly the process of the successive projections ,\newline 
\begin{eqnarray}
\pi^{0}_{CDD_2} H_{SB} &=& \pi_y^{(0)}\pi_x^{(0)} \pi_y^{(0)}\pi_x^{(0)} H_0 \notag\\
&=&\pi_y^{(0)}\pi_x^{(0)}[\sigma_x\otimes B_x^{(1)}+\sigma_y\otimes B_y^{(1)}+\sigma_z\otimes B_z^{(2)}]\notag\\
&=&\sigma_x\otimes B_x^{(2)}+\sigma_y\otimes B_y^{(2)}+\sigma_z\otimes B_z^{(4)} \notag\\
&\sim& k^{(2)}o(\tau_d^2).
\end{eqnarray}
As we can see from above, a total of eight eliminations (each projection pulse sequence eliminates two terms in the orthogonal directions) are used to completely remove the first two orders of the interaction $H_{SB}$. However, the two additional eliminations of $B_z$ does not contribute to further increasing of the suppression order. To avoid this, we consider projecting along each direction exactly once, namely $\pi_x\pi_y\pi_z$, which belong to the CPDD class $\lbrace 1,1,1\rbrace$. Translating the projection back to corresponding pulse sequence according to the rule of concatenation, we have\newline
\begin{eqnarray}
\pi_x^{(0)}\pi_y^{(0)}\pi_z^{(0)} &:& p_x[p_y[p_z]] \notag\\
&=& p_x[Y(ZZ)Y(ZZ)] \notag\\
&=& X(XZXZ)X(XZXZ) \notag\\
&=& IZXZIZXZ ,
\end{eqnarray} %
which only uses eight pulses and six projections. This was first found by a genetic algorithm and called GA8$_a$ in Ref.~\cite{Quiroz:13}. \\
\\
To achieve suppression order of $N = 2$, CDD$_2$ needs $16$ pulses while GA8a sequences only requires $8$ pulses.  This efficiency of using pulses comes from the very fact that GA8a uses basic projections $\pi_i$ as building blocks while CDD$_2$ uses composed projections $\pi_i\pi_j$ ($i\neq j$) as building blocks.\\
\\
In Ref.~\cite{Quiroz:13}, the author claimed to find another 8 pulse sequence GA8$_b$=$Z(XYXY)Z(XYXY)$ which also achieved suppression order of $N = 2$. From the structure of GA8$_b$, we know the projections induced by it is,
\begin{equation}
\pi^{(0)}_{GA8_b}=\pi_z^{(0)}\pi_z^{(0)}\pi_y^{(0)},
\end{equation}
which belong to CPDD class $\lbrace 0,1,2\rbrace$. Using the results from Eq.~(\ref{eq:suppr}), the suppression order of GA8$_b$ is equal to 1. Therefore the claim in Ref.~\cite{Quiroz:13} is typo.\\
\\
To double check our results, we also use the multi-precision package {\it mpmath} \cite{mpmath} to compute the suppression order of both GA8$_a$ and GA8$_b$ for a 5-spin model with random coupling constants. Here the distance $D$ is defined as the distance between an actual evolution operator and the unit operator \cite{Quiroz:13},
\begin{equation}
D(U,\unit_S) = \sqrt{1-\frac{1}{d_\mathcal{H}}||\Gamma||_{Tr}},
\end{equation}
where $d_\mathcal{H}$ is the dimension of the Hilbert space, $\Gamma = \tr_S[U]$ and $||\cdot||_{Tr}$ represents the trace norm. An upper bound of $D$ can be calculated\cite{Quiroz:13}, 
\begin{equation}
D\lesssim O[\tau_d^{N+1}].
\end{equation}
Therefore, we can extract the suppression order by plotting $D$ versus $\tau_d$ in the log-log diagram. As we can see in Figure \ref{fig:GA8}, GA$_{8a}$ achieves higher suppression order than GA$_{8b}$ which is in agreement with the argument from the view point of our CPDD.
\begin{figure}
\includegraphics[trim={0 0 0 12cm},clip,width=0.55\textwidth]{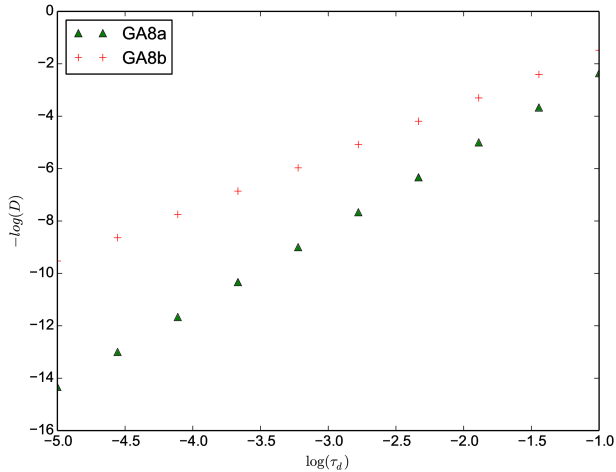}
\caption{\small Comparing the suppression order of GA8$_a$ and GA8$_b$. $D$ is the distance between actual evolution operator and the unit operator, defined in \cite{Quiroz:13}, and $\tau_d$ is the pulse interval. The suppression order $N$ is defined by the relation $D\sim O(\tau_d^{N+1})$. We consider the parameters $J$ in the range of  $J\tau_d\in[10^{-6},10^{-1}]$, where $J$ is the norm of the interaction Hamiltonian.}\label{fig:GA8}
\end{figure}
\newline
\section{Conclusion}
\label{sec:conc}
We have developed tour CPDD schemes in which CPDD sequences are concatenated from different projection pulse sequences. We also define CPDD equivalence classes as the set of pulse sequences that have the same leading order along each direction in the Magnus expansion of the average Hamiltonian. Based upon the definition of our CPDD pulse sequences, we prove a series of properties about the structure of the sequences that must hold for CPDD. We also give a formula to calculate suppression order for a given CPDD class specified by $\lbrace n_x,n_y,n_z\rbrace$. We propose the optimal uniform DD sequence given in Eq.~(\ref{eq:OUDD}) use in experiments, since each of the sequences in the series achieves its suppression order using minimum pulse numbers. Although some of OUDD are already known DD sequences, our CPDD framework gives a unifying and consistent way to both understand and construct all of them.\\
\\
The main advantage of using UDD is that the pulse number needed scales linearly with the suppression order, $N_{UDD}\sim \mathcal{O}(K)$, which is much more efficient than the exponentially dependence in CPDD. However, UDD is subject to several difficulties. Firstly it is valid only for environmental spectrum with a hard cut-off \cite{Pasini:2010,Uhrig:08}, and secondly it is very sensitive to pulse errors \cite{Xiao:11,Lange:2010,Ryan:2010,Souza:2012}. However, for our CPDD, especially the OUDD class, some pulse sequences have rotation symmetry thus making them robust against pulse error. Although we have not given a rigorous bound analysis for CPDD here, the results and the calculation should be similar to those in Ref.~\cite{KhodjastehLidar:07}: the distance between the actual state and the desired state goes to zero as the concatenation level goes to infinity. More importantly, pulse errors are suppressed along with series of concatenations as long as the error is not too large.\\
\\
Moreover, although the suppression order of CPDD is still exponentially dependent on the pulse number, $N_{\rm{CPDD}}\sim\mathcal{O}(\log_2K)$,  the reduction of the number of pulses for the same suppression order is exponentially large compared to the case of CDD, $N_{\rm{CDD}}\sim\mathcal{O}(\log_4K)$.\\
\\
Given the structure of our CPDD, one can see that two directions to try to find more efficient pulse sequences are worth exploring. Firstly, to find basic sequences, which could achieve higher $N$ to $K$ ratio than projection pulse sequences. The concatenated UDD \cite{Uhrig:09,Lidar:10} obviously belongs to this class. However, the applicability of Lemma 1 in the context of non-uniform DD is questionable and requires a new proof. Secondly, we seek to find new ways to combine two pulse sequences or two projections. Since the fact that exponentially large pulse numbers are needed to achieve high suppression order resulting from the concatenation , new ideas in this direction may greatly improve the efficiency of uniform dynamical decoupling.

\acknowledgments We thank for helpful discussion with Robert Lanning and Zhihao Xiao%
. This work is supported by the Air Force Office of Scientific Research, the Army Research Office, and the National Science Foundation. 

\appendix

\label{append}
\section{}
In this appendix we include the proof of Theorem \ref{theo:CPMG} and Lemma \ref{lem:conca}, which are the foundations of our CPDD scheme. A proof about a property of cyclic pulse sequences is also included.\\
\\
\noindent\textbf{Proof of Theorem \ref{theo:CPMG}:}
Without loss of generality, we consider the pulse sequence is $ZZ$, where $Z = -i\sigma_z$. The transformed Hamiltonian $\tilde{H}(t)$ is given by\newline
\begin{equation}
\tilde{H}(t)\equiv\begin{cases}
 H_1, ~0<t<\tau_d\\
 H_2, ~\tau_d<t<2\tau_d~,
\end{cases}
\label{theo1:eq:H}
\end{equation}
where
\begin{eqnarray}
H_1& =& \unit\otimes B_0+\sigma_x\otimes B_x+\sigma_y\otimes B_y+\sigma_z\otimes B_z,~~~~\\
H_2& =& \unit\otimes B_0-\sigma_x\otimes B_x-\sigma_y\otimes B_y+\sigma_z\otimes B_z.~
\end{eqnarray}
After applying the pulse sequence $P_jP_j$, the first and second order expansion of the average Hamiltonian is given by Eq.~(\ref{eq:magnus}),
\begin{eqnarray}
\bar{H}^{(0)}&=&\frac{1}{2}\sum_{i=1}^2 H_i,\\
\bar{H}^{(1)}&=&\frac{-i}{2}\tau_c\sum_{i<j} [H_i,H_j].
\end{eqnarray} %
Using Eq.~\ref{theo1:eq:H}, we have 
\begin{eqnarray}
\bar{H}^{(0)} &=& \unit\otimes B_0 +\sigma_z\otimes B_z,\\
\bar{H}^{(1)} &=& \tau_d[B_0,B_x]\otimes\sigma_x\notag\\ & & + \tau_d[B_0,B_y]\otimes\sigma_y + 2\tau_d[B_y,B_x]\otimes \sigma_z.
\end{eqnarray}
Since the commutator between bath operators is not zero in general, we have 
\begin{equation}
\bar{H}_{err} = \sigma_z\otimes [B_z+k^{(1)}_zo(\tau_d)]+\sum_{i\perp z}\sigma_i\otimes [B_i^{(1)}+k^{(2)}_io(\tau_d^2)],
\end{equation} %
where $B_i^{(1)} = \tau_d[B_0,B_i]$. The same calculation gives similar results for $j = x, y$. Q.E.D.
\\

\noindent\textbf{Proof of Lemma \ref{lem:conca} :}
The concatenated sequence $C$ is given by
\begin{equation}
(P_iP^B_K)P^B_{K-1}...P^B_1(P_iP^B_K)...P^B_1,
\end{equation} 
where the bracket means the there is no free evolution in between the two pulses inside the bracket. Since the projection pulse sequence is cyclic by definition, and sequence $B$ is also cyclic by Theorem \ref{theo:cyc}, 
\begin{equation}
\label{eq:prf1:cyc}
\prod_{i=1}^{K}P^B_i = e^{i\phi}\unit.
\end{equation} %
The $2K_B$ evolution operator of the control field, $U_m^C$ is given by, 
\begin{equation}
U_m^C = \prod_{j\leq m}P^C_j
\end{equation} 
To construct $\pi_A^{(0)}$ and $\pi_B^{(0)}$, we group $U_m^C$ and $U_{m+K_B}^C$ together($m\leq K_B$). \newline
For $1\leq m < K_B$,
\begin{eqnarray}
U_m^C &=&\prod_{j<=m}P^B_j=U_m^B, 
\end{eqnarray}
and,
\begin{eqnarray}
U_{m+K_B}^C &=&\Big(\prod_{j<=m}P^B_j\Big) P_i\Big(\prod_{j>m}^{K_B}P^B_j\Big)\Big(\prod_{j<=m}P^B_j\Big)\notag\\
&=&e^{i\phi}P_i\prod_{j<=m}P^B_j\notag\\
&=& e^{i\phi}P_i U^B_m~,
\end{eqnarray}
where we have used the commutativity of Pauli matrices and the cyclic property Eq.~(\ref{eq:prf1:cyc}) of pulse sequence $B$.

Now adding the action of $U_m^C$ and $U_{m+K_B}^C$ on $H_{SB}$ together, we have
\begin{eqnarray}
\label{eq:prf1:1}
& & U_m^{C\dagger}H_{SB}U_m^C + U_{m+K_B}^{C\dagger}H_{SB}U_{m+K_B}^C \notag\\
&=& U_m^{B\dagger}H_{SB}U_m^B +U_m^{B\dagger}P_i^\dagger H_{SB}P_iU_m^B \notag\\
&=&U_m^{B\dagger}(2\pi_A^{(0)}H_{SB})U_m^B
\end{eqnarray}
If $m = K_B$, we have $U_{K_B}^C$ and $U_{2K_B}^C$, which are 
\begin{eqnarray}
\label{eq:prf1:2}
U_{K_B}^C&=&P_iP_{K}^B\prod_{j<K_B}P^B_j\notag\\
&=&P_i\prod_{j<=K_B}P^B_j\notag\notag\\
&=&e^{i\phi}P_iU_{K_B}^B
\end{eqnarray}
\begin{eqnarray}
\label{eq:prf1:3}
U_{2K_B}^C&=&P_i\Big(\prod_{j<=K_B}P^B_j\Big)P_i\Big(\prod_{j<=K_B}P^B_j\Big)\notag\\
&=&e^{i\phi}U_{K_B}^B.
\end{eqnarray}
Now using Eqs.~(\ref{eq:prf1:1}, \ref{eq:prf1:2}, \ref{eq:prf1:3}), the first order of average Hamiltonian after applying sequence $C$ is given by
\begin{eqnarray}
\bar{H}^{(0)}_C&=& \frac{1}{2K_B}\sum_{m=1}^{2K_B}U_{m}^{C\dagger} H_{SB}U_{m}^C
\notag\\ 
&=&\frac{1}{2K_B} \sum_{m=1}^{K_B}\Big(U_m^{C\dagger}H_{SB}U_m^C + U_{m+K_B}^{C\dagger}H_{SB}U_{m+K_B}^C\Big)\notag\\
&=&\frac{1}{K_B}\sum_{m=1}^{K_B}U_m^{B\dagger}\Big(\frac{1}{2}\sum_{l=1}^{K_A}U_l^{A\dagger} H_{SB}U_l^A\Big)U_m^B\notag\\
&=&\pi_B^{(0)}\pi_A^{(0)}H_{SB}
\end{eqnarray}
Q.E.D.
\begin{mytheorem}
\label{theo:cyc}
The concatenation of two cyclic pulse sequences is still cyclic.
\end{mytheorem}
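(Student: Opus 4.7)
The plan is to reduce the statement to the algebraic characterization of cyclicity given in Eq.~(\ref{eq:cyc1}): a pulse sequence is cyclic iff the ordered product of all its pulses equals a scalar phase times the identity. So for cyclic sequences $A = P_{K_A}^A\cdots P_1^A$ and $B = P_{K_B}^B\cdots P_1^B$, I may assume $\prod_{i=1}^{K_A} P_i^A = e^{i\phi_A}\unit$ and $\prod_{i=1}^{K_B} P_i^B = e^{i\phi_B}\unit$, and my goal is to show that the concatenation $C = A[B]$ satisfies an analogous identity.

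Next I would write $C$ out explicitly according to the concatenation rule used in Lemma \ref{lem:conca}, namely inserting a full copy of $B$ after each pulse of $A$:
\begin{equation}
C = P_{K_A}^A\,B\;P_{K_A-1}^A\,B\;\cdots\;P_1^A\,B.
\end{equation}
Taking the ordered product of all pulses of $C$ and grouping each copy of $B$ into $e^{i\phi_B}\unit$ (using its cyclicity), the scalar phases pull out past the remaining $A$-pulses since $\unit$ commutes with everything. This collapses the product to $e^{iK_A\phi_B}\prod_{i=1}^{K_A} P_i^A = e^{i(K_A\phi_B+\phi_A)}\unit$, which is exactly the cyclicity condition for $C$ with overall phase $\phi_C = K_A\phi_B+\phi_A$.

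The only subtlety worth verifying carefully is the bookkeeping of the pulse ordering: the $B$-blocks are interspersed between successive $A$-pulses rather than simply appended, so it must be made explicit that each $B$-block collapses independently to a scalar before the remaining $A$-product is evaluated. This is immediate once one notes that $e^{i\phi_B}\unit$ is central in the pulse algebra, but it is the step that someone might skip over; I would spell it out. After that, the conclusion follows in a single line, so I do not anticipate any real obstacle.
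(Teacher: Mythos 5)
Your proposal is correct and follows essentially the same route as the paper's own proof: reduce cyclicity to the product condition of Eq.~(\ref{eq:cyc1}), write out $C=A[B]$ with a copy of $B$ after each pulse of $A$, collapse each $B$-block to the central scalar $e^{i\phi_B}\unit$, and read off $\prod_i P_i^C = e^{i(K_A\phi_B+\phi_A)}\unit$. Your phase bookkeeping ($K_A$ copies of $e^{i\phi_B}$) is in fact the correct count, whereas the paper writes $K_B\phi_B$ --- an immaterial slip since the overall phase is arbitrary.
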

\textbf{Proof of Theorem \protect\ref{theo:cyc} :}
Consider two cyclic pulse sequences $A$ and $B$. From [Eq.~(\ref{eq:cyc1})] we have
\begin{eqnarray}
\prod_{i=1}^{K_A}P_i^A = e^{i\phi_A}\unit,\\
\prod_{i=1}^{K_B}P_i^B = e^{i\phi_B}\unit.
\end{eqnarray}
The pulse sequence $C$ is constructed by concatenating $A$ and $B$, thus
\begin{eqnarray}
C &=& A[B]\notag\\ 
&\equiv& P_1^A(P_1^B...P_{K_B}^B)P_2^A(P_1^B...P_{K_B}^B)...P_{K_A}^A(P_1^B...P_{K_B}^B)\notag\\
\end{eqnarray} 
Therefore the product of all pulses of sequence C is
\begin{eqnarray}
\prod_{i=1}^{K_C}P_i^C &=& P_1^A (\prod_{i=1}^{K_B}P_i^B)....P_{K_A}^A(\prod_{i=1}^{K_B}P_i^B)\notag\\
&=&P_1^Ae^{i\phi_B}...P_{K_A}^Ae^{i\phi_B}\notag\\
&=&e^{iK_B\phi_B}\prod_{i=1}^{K_A}P_i^A\notag\\
&=&e^{i(K_B\phi_B+\phi_A)}\unit\notag\\
&=&e^{i\phi_C}\unit
\end{eqnarray}
where we define $\phi
_c=K_B\phi_B+\phi_A$. Therefore, the pulse sequence $C$=$A$[$B$] is also cyclic. Q.E.D.
\clearpage

\end{document}